\documentclass{article}

\usepackage[utf8]{inputenc} 
\usepackage[T1]{fontenc}    
\usepackage{url}            
\usepackage{booktabs}       
\usepackage{amsfonts}       
\usepackage{nicefrac}       
\usepackage{microtype}      

\usepackage[skip=0pt]{caption}

\usepackage{graphicx}
\usepackage[dvipsnames]{xcolor}
\usepackage[colorlinks,
urlcolor=ForestGreen,
linkbordercolor=red,
citecolor=red]{hyperref}
\hypersetup{
	colorlinks = true
}

\usepackage{amssymb}      
\usepackage{amsthm}       
\usepackage{commath}      
\usepackage{amsmath}      
\usepackage{amsfonts}     
\usepackage{bm}           
\usepackage{xargs}        
\usepackage{float}
\usepackage{tikz}

\usepackage{colortbl}  
\definecolor{highlightyellow}{rgb}{1.0, 1.0, 0.75}

\usepackage[nonatbib,preprint]{neurips_2026}
\usepackage[
    backend=biber,
    style=numeric-comp,  
    sorting=none,        
    giveninits=true,     
    maxbibnames=2,       
    minbibnames=1,
    doi=true,           
    isbn=false,
    url=false,
    eprint=false,
]{biblatex}

\renewbibmacro{in:}{%
  \ifentrytype{article}{}{\printtext{\bibstring{in}\intitlepunct}}}

\AtEveryBibitem{\clearlist{language}\clearfield{langid}}

\renewbibmacro*{name:andothers}{%
  \ifboolexpr{
    test {\ifnumequal{\value{listcount}}{\value{liststop}}}
    and
    test \ifmorenames
  }
    {\andothersdelim\bibstring[\textit]{andothers}} 
    {}}

\newcommand{\paren}[1]{\left(#1\right)}

\theoremstyle{plain}

\newtheorem{proposition}{Proposition}

\theoremstyle{definition}

\theoremstyle{remark}

\numberwithin{proposition}{section}
\numberwithin{equation}{section}

\newcommand{\MatNormal}[5]{\mathcal{MN}_{#1\times #2}\left(#3; #4, #5\right)}

\newcommandx{\E}[2][1=]{\mathbb{E}_{#1}\left(#2\right)} 

\newcommand{\tr}{\operatorname{tr}}  
\newcommand{\logdet}{\log\det}       

\newcommand{\fronorm}[1]{\|#1\|_F}   

\newcommand{\est}[1]{\widehat{#1}}   

\newcommand{\reals}{\mathbb{R}}      
\newcommand{\Spp}[1]{\mathbb{S}_{++}^{#1}}  

\newcommand{\KL}[2]{\mathrm{KL}\!\left(#1 \;\|\; #2\right)}

\title{Partitioning Neural Co-Variability}

\author{%
  Skyler Thomas \\
  Johns Hopkins University\\ Department of Biomedical Engineering\\
  \texttt{sthom215@jhu.edu} \\
  \And
  Brandon J. Zhu \\
  Johns Hopkins University\\ Department of Biomedical Engineering \\
  \texttt{bzhu18@jhmi.edu} \\
  \And
  Kathleen E. Cullen \\
  Johns Hopkins University\\ Department of Biomedical Engineering \\
  Center for Hearing and Balance\\
  Kavli Neuroscience Discovery Institute\\
  \texttt{kathleen.cullen@jhu.edu} \\
  \And
  Adam S. Charles \\
  Johns Hopkins University\\ Department of Biomedical Engineering \\
  Center for Imaging Science\\
  Kavli Neuroscience Discovery Institute\\
  \texttt{adamsc@jhu.edu}
}

\begin{document}

\maketitle

\begin{abstract}
Trial-to-trial variability of neural responses has been linked to important aspects of neural computation and is essential for understanding how neuronal populations respond. While current overdispersion models treat each neuron's gain as independent of each other, this assumption fails to capture the network statistics of neuronal populations. As no existing model can capture overdispersed structured spiking gain-modulation across a neural population, network-level gain covariance remains largely unstudied. We thus present the Poisson matrix-normal latent variable (PMNLV) model, which extends single-neuron overdispersion to neural populations by placing a matrix-normal prior over the latent gain with a Kronecker-factored covariance. Spike counts are Poisson-distributed with a rate equal to the sum of a per-neuron stimulus tuning term and the matrix-normal gain, passed through a quadratic soft-rectifying link. We derive two complementary estimation algorithms: a variational EM (VEM) with a matrix-normal posterior that recovers dense Kronecker factors without structural assumptions, and a Kernel Tournament Method (KTM) that performs data-driven selection over a biologically motivated kernel dictionary and composite likelihood. On simulated data, both algorithms recover the inter-neuron and temporal covariance factors alongside accurate tuning curves. Applying VEM to Neuropixel recordings across four cortical regions of the mouse visual hierarchy, we replicate a previous finding that single-neuron marginal variability changes little across cortical areas. We then show that shared population co-variability, invisible to scalar summaries such as the Fano factor, peaks in primary visual cortex and declines in higher visual areas, consistent with functional connectivity studies. The PMNLV framework is applicable to any simultaneously recorded population where structured gain covariance is of scientific interest. We present methods for basic neuroscience research that elucidate population-level changes without compromising individual neuronal statistics.
\end{abstract}

\section{Introduction}\label{sec:introduction}

Neural coding, or how neural responses change as a function of external stimuli or behavior, forms a fundamental language for how neuroscience quantifies information processing in the brain. A classical observation has been that beyond firing pattern changes in response to different stimuli, neurons also exhibit significant variability between presentations of the same stimuli. Modeling this trial-to-trial variability has proven important for creating better models of neural activity that appropriately capture the conditional distribution of response observations~\cite{keeley2020, churchland2011,dichter2016,lakshminarasimhan_dynamical_2023}. 

Models of neural response variability include the Fano factor (ratio of the response variance to the mean), doubly stochastic models~\cite{goris2014, charles2018, aghamohammadi2025}, and renewal processes that characterize variability via inter-spike-interval (ISI) statistics~\cite{koyama2015, cunningham2007, adams2009}. Within this body of work, much of the literature has focused on the idea of over-dispersed Poisson models, i.e., models that overlay additional sources of variability within the traditional Poisson characterization of the number of observed spikes from a neuron within a given time-bin. These models focus on the fact that most cortical systems exhibit statistics where the variance is greater than the mean~\cite{gur_response_1997, barberini_comparison_2001,friedenberger2023}, and build off of the traditional Poisson model via a hierarchical model. Overdispersion models capture changing response statistics within and across trials~\cite{goris2014,charles2018}, but consider only single neurons, ignoring population-level coordination in trial-to-trial variability gains. 

Accurately modeling correlated gains is important, as they reflect large-scale projections into the population~\cite{carandini2004}, providing insight into neural processing beyond the recorded spikes. Neural populations exhibit qualitatively new statistical structure absent at the level of individual neurons~\cite{anderson_more_1972}, and among these emergent properties, the covariance of gain fluctuations across neurons and time determines how much information a population transmits to downstream areas~\cite{moreno-bote2014}. To bridge this gap, we propose the Poisson matrix-normal latent variable (PMNLV) model that relaxes the typical assumption taken in single-neuron overdispersion models where the residual gain fluctuations are statistically independent across neurons. This extension extends the traditional Poisson model of neural spike counts by introducing a neuron-specific latent gain that modulates the firing rate~\cite{goris2014,charles2018,rupasinghe2025}, with more recent extensions partitioning variability into rate and timing components via renewal processes~\cite{aghamohammadi2025}. These models accurately characterize single-neuron statistics but treat each neuron's gain as independent of every other. Population models such as GPFA and GP-LVM capture smooth shared variability across neuronal activity through a low-dimensional latent trajectory~\cite{lawrence2005,yu2008a}, but the observation model is Gaussian and often assumes independent noise, and neither model accounts for overdispersion in the marginal spike count distributions. No existing model simultaneously permits correlated gain fluctuations across neurons and time while respecting the Poisson spiking statistics of the marginal observations.

In our proposed PMNLV population-level overdispersion model, we model spike counts as Poisson distributed with a matrix-normal latent gain $\mathcal{MN}(\bm{0}, \bm{U}, \bm{V})$, i.e., the gain across neurons and time (or trial) have a Kronecker factored covariance $
\mathbf{\Sigma} = \mathbf{V} \otimes \mathbf{U}$. The Kronecker factors encode inter-neuron gain correlations in $\mathbf{U} \in \mathbb{R}^{N \times N}$ and temporal covariance across trials in $\mathbf{V} \in \mathbb{R}^{T \times T}$. We derive two complementary estimation algorithms: a variational EM (VEM) algorithm that jointly infers tuning curves and dense factors $\bm{U}, \bm{V}$ without structural assumptions on the latent gain covariance, and a Kernel Tournament Method (KTM) that performs data-driven selection over a biologically motivated kernel dictionary via pairwise composite likelihood. By applying VEM to Neuropixels recordings across four cortical regions, we find that in addition to replicating previous results that marginal single-neuron variability is near flat across visual hierarchy, shared population co-variability actually peaks in primary visual area (VISp) and subsequently declines in higher areas. This result illustrates how VEM can detect structured gain covariance in large neural data sets that are invisible to the single-unit variability models.

\section{Background}\label{sec:background}
We review three bodies of work that motivate the PMNLV model: (i) single-neuron overdispersion models, which establish that trial-to-trial variability is dominated by latent gain fluctuations but treat the gain of each neuron as statistically independent of every other; (ii) population models, which capture shared structure across neurons but impose conditionally independent Gaussian observations, absorbing correlated gain fluctuations into residual noise; and (iii) other models e.g., coupled spike-generation models, which encode pairwise dependencies through the spike-generation mechanism but lack a latent variable that co-modulates gain across the network. No existing model simultaneously accounts for correlated gain fluctuations across neurons and time while respecting the Poisson marginal statistics of spike counts.

\paragraph{Single-neuron models of neural variability:}
The Poisson model has been used extensively in neuroscience to study spike count variability and treats observed counts as Poisson-distributed with a stimulus-dependent rate $\lambda_i$~\cite{zemel_probabilistic_1998, nirenberg2003, amarasingham_spike_2006, cohen2011}.
\begin{equation}
	y_i \mid \lambda_i \sim \mathrm{Poisson}(g(\lambda_i)),
\end{equation}
where $g(\cdot)$ is the inverse link function and the underlying rate $\lambda_i$ is often considered as a function of the task, e.g., the input stimuli: $\lambda_i = f_i(x)$. However, it is widely known from empirical observations that a fundamental property of Poisson counts, i.e., that the mean equals the variance  ($\mathrm{Var}(y_i) = \mathbb{E}[y_i]$), does not hold~\cite{gur_response_1997, barberini_comparison_2001,friedenberger2023}.

\paragraph{Gain-modulated Poisson models:}
More recent directions have extended the probabilistic model to account for trial-by-trial gain fluctuations. Generally, these approaches often take the form of a hierarchical model where the latent Poisson rate driving the spike counts is itself random, providing ``super-Poisson'' properties. Specifically, we can consider the rate $\lambda_i$ as itself being a function of a random variable $n_i$ in addition to the stimuli,
\begin{equation}
	y_i \mid n_i \sim \mathrm{Poisson}\!\left(g(f_i(x) + n_i)\right).
\end{equation}
For example, in~\cite{goris2014} a product factor $h(f_i(x),n_i) = n_i g(f_i(x))$ between a Gamma-distributed gain $n_i$ and the stimulus-driven rate $\gamma_i = g(f_i(x))$ was used, with mean and variance $\mathbb{E}[y_i] = \gamma_i$ and $\mathrm{Var}(y_i) = \gamma_i + \gamma_i^2 \mathrm{Var}(n_i)$, giving a quadratic relationship between model variance and mean as the stimulus varies. A more general version in~\cite{charles2018} considered the additive-nonlinear form $g(f_i(x) + n_i)$ where a Gaussian-distributed gain $n_i$ is added to the stimulus-driven rate before the link function $g(\cdot)$, producing a more flexible model of the variance as a function of the mean.

\paragraph{Renewal processes:}
An alternative way to model single-neuron variability is via queuing theory where firing rate variability is modeled via both the relative times between events and a time-varying density of spikes, rather than the overall counts in a time-bin~\cite{cunningham2007,koyama2015,aghamohammadi2025,adams2009}. In this setting, a Poisson distribution is modeled by events that occur with independent, exponentially distributed inter-spike intervals (ISIs) with a mean related to the rate $\lambda_i$, and over-dispersion can be achieved via modifications to the ISI distribution. For example, recent work in~\cite{aghamohammadi2025} modeled spike occurrences as a doubly stochastic renewal point process defined by the pair $\{\tau(\cdot), \lambda(t)\}$, where $\tau(\cdot)$ is the ISI distribution in operational time and $\lambda(t)$ is the instantaneous firing rate. Spikes are first generated in operational time $t'$ by sampling ISIs from $\tau(\cdot)$, with the mean ISI fixed to $\mu_\tau = 1\,\mathrm{s}$, i.e., the firing rate in operational time is unity. Spike times are then mapped to real time $t$ via the inverse cumulative firing rate, $t = \Lambda^{-1}(t')$, or $t' = \Lambda(t) = \int_0^t \lambda(s)\,ds$. Across all single-neuron formulations, the gain $n_i$ is treated as statistically independent across neurons while the population-level covariance structure of these fluctuations remains unmodeled.

\paragraph{Population models of neural activity:} Complementing the literature that seeks improved models of each neuron's response characteristics has been the effort to model the interactions between neurons. Population-level analyses of firing rates generally fall into two classes. The first are methods that seek a low-dimensional latent trajectory of neural activity, e.g., in a latent space given by probabilistic principal component analysis or other methods e.g., neuron-coupling ~\cite{pillow2008} or neural networks~\cite{sussillo2016}.

\paragraph{Latent population models:} Population-level models often focus on describing the shared geometry underlying the neural activity across the network. One common geometrical structure is a latent subspace, i.e., a linear generative model. Specifically, models such as PPCA consider the neural activity as a linear map $\bm{W}\in \reals^{N \times P}$ applied to a $P$-dimensional latent variable $\bm{z}_t \in \reals^P$, where $P \ll N$. By further modeling the observation error as Gaussian and the latent state as having a Gaussian prior, PPCA results in the model where $\bm{W}$ is the latent-to-neural-activity matrix, and $\bm{\Phi}$ is the diagonal observation noise covariance. Extensions to this base model have included temporal smoothness priors over $\bm{z}_t$ via Gaussian processes~\cite{yu2008a, keeley2020identifying}. In these cases the prior over $\bm{z}_t$ is not factorizable over time $t$, resulting in an effective matrix-normal prior $\bm{Z} = [\bm{z}_1,\dots,\bm{z}_T] \sim \mathcal{MN}(\bm{0}, \bm{I}, \bm{K})$, where $\bm{K}$ is a GP covariance matrix with symmetric kernel $k(\cdot,\cdot)$: $\bm{K}_{ij} = k(t_i, t_j)$. These GP-based latent variable models (GP-LVMs) can be adapted by varying $k(\cdot,\cdot)$ to adjust the temporal smoothness~\cite{lawrence2005, yu2008a, santhanam2009,wu2017}. In these cases, observations are modeled as Gaussian as opposed to Poisson.

\paragraph{Coupled spike-generation model:} Another class of models characterizes inter-neuron dependencies more explicitly through the spike-generation mechanism. In the generalized linear population model of~\cite{pillow2008}, each neuron's conditional intensity is driven by a linear combination of the stimulus, its own spike history, and filtered spike trains from the other neurons in the population,

\begin{equation}\label{eq:glm_coupled}
	\lambda_i(t) = g\!\left(\bm{k}_i^\top \bm{x}(t) + \sum_{j=1}^{N} (\bm{h}_{ij} * y_j)(t) + b_i\right),
\end{equation}

where $\bm{k}_i \in \reals^d$ is the stimulus filter for neuron $i$, $\bm{h}_{ij}$ is a temporal coupling filter from neuron $j$ to neuron $i$, $*$ denotes convolution with the spike train $y_j$, $b_i$ is a scalar bias, and $g(\cdot)$ is a point nonlinearity. The GLM-style formulation captures pairwise spiking dependencies and can reproduce correlated firing, but there is no latent variable that co-modulates firing rates across the network between spikes.


\section{Poisson matrix normal latent variable model}\label{sec:pmnlv}
We first specify the generative model, then derive two complementary estimation algorithms in Sections~\ref{sec:vem} and~\ref{sec:ktm}.

\begin{figure}[!t]
	\centering
	\includegraphics[width=\linewidth]{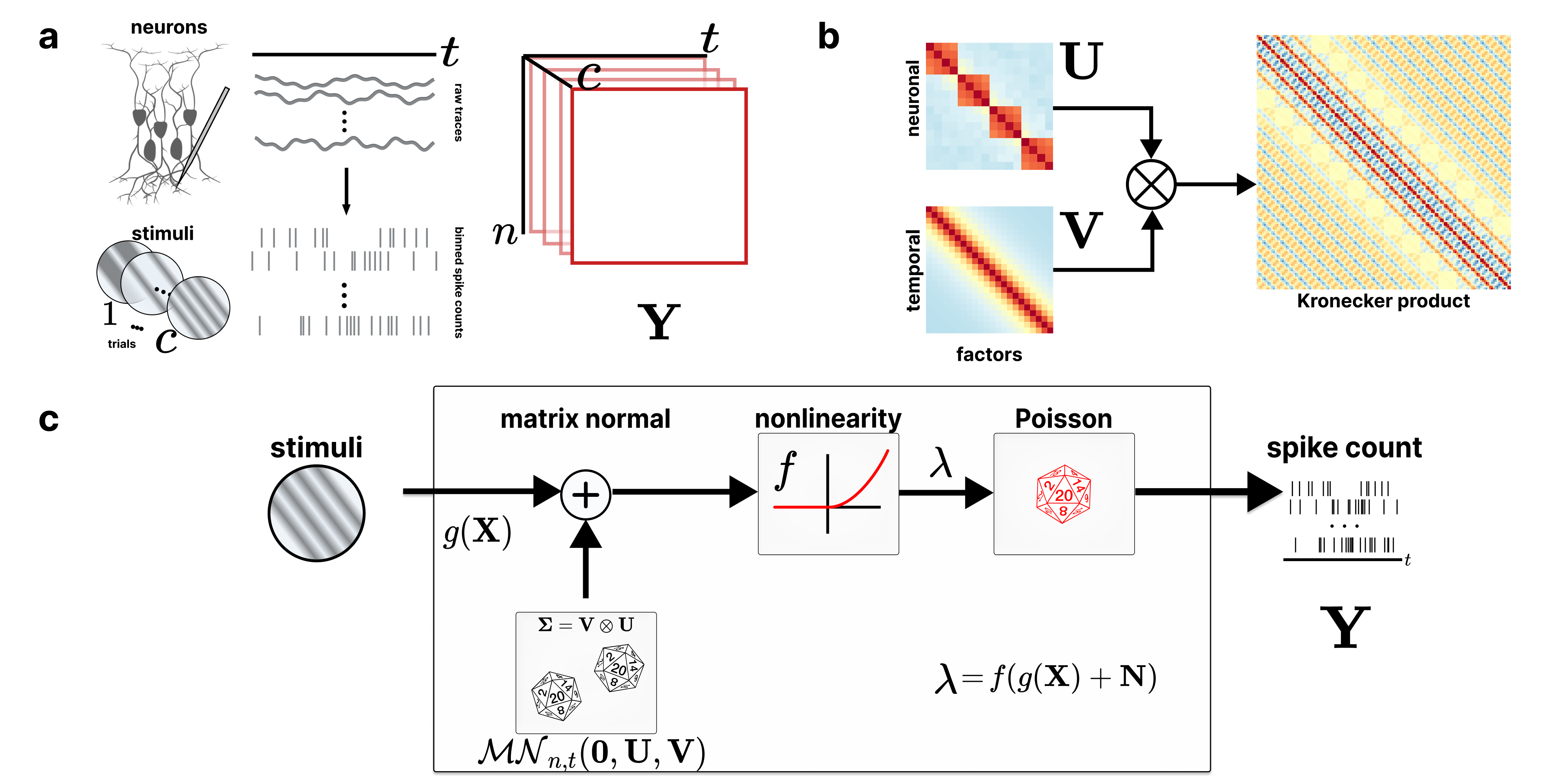}
	\caption{Poisson matrix-normal latent variable model. (a) Data collection from $N$ simultaneously recorded neurons, $C$ experimental conditions, and $T$ time steps yields a data tensor $\mathcal{Y}$. (b) A block-structured $NT \times NT$ Kronecker covariance matrix (right) decomposes into a neuron factor $\bm{U}$ (top left) capturing inter-neuron correlations and a time factor $\bm{V}$ (bottom left) capturing temporal covariance, which serves as the prior for the generative model. (c) The stimulus-driven tuning term $f(\bm{X})$ and a zero-mean matrix-normal gain $\mathcal{N} \sim \mathcal{MN}(\bm{0}, \bm{U}, \bm{V})$ are summed and passed through the quadratic soft-rectifying link $g(\cdot)$ before Poisson spike generation.}
\label{fig:schema}
\vspace{-0.5cm}
\end{figure}

\paragraph{Generative model:}\label{sec:gen_model}
We consider simultaneously measured neuronal electrophysiological recordings from a population of neurons $n\in\{1,\dots, N\}$ with binned times $t\in\{1,\dots, T\}$ across multiple trials or conditions $c\in\{1,\dots, C\}$. Each condition indexes an observed data matrix $\bm{Y}_c \in \mathbb{R}^{N \times T}$, which stacked across conditions produces a tensor $\mathcal{Y}$ of dimensions $N\times T\times C$. Following the single-neuron framework~\cite{charles2018}, we model each count as Poisson with a rate determined by the sum of a stimulus-driven component and a latent population gain,
\begin{equation}\label{eq:obs}
	\bm{Y}_c \mid \bm{N}_c \sim \mathrm{Poisson}\!\left(g\!\left(f(\bm{X})_c + \bm{N}_c\right)\right),
\end{equation}
where $f(\bm{X})_c \in \mathbb{R}^{N \times T}$ is the matrix of stimulus tuning values for condition $c$ and $g(\cdot) = \mathrm{softplus}(\cdot)^2$ is a quadratic soft-rectifying link applied elementwise which was chosen so that the marginal variance grows similarly to the overdispersion of recorded cortical data. The key departure from the single-neuron model is that the latent gain $\bm{N}_c \in \mathbb{R}^{N \times T}$ is no longer independent across neurons; its prior structure encodes the population co-variability across neurons and time. 
In extending the single-neuron gain prior to a population, we model each $\bm{N}_c$ as a matrix-normal random matrix: $\bm{N}_c\sim\mathcal{MN}(\bm{0},\bm{U}, \bm{V})$,
separating neuronal and temporal structure into interpretable factors. 
Here $\bm{N}_c$ is drawn independently across conditions, treating $c$ as the replicate axis. The same prior applies when $c$ indexes trials, where the row factor $\bm{U} \in \Spp{N}$ encodes inter-neuron gain correlations and the column factor $\bm{V} \in \Spp{T}$ encodes temporal covariance. While $\bm{\Sigma}$ is identified, with full covariance $\bm{\Sigma} = \bm{V} \otimes \bm{U} \in \reals^{NT \times NT}$, its Kronecker factors are identifiable up to a positive scalar~\cite{werner2007}. Each of our estimation algorithms resolves this ambiguity by its own normalization convention, described below.

\paragraph{Marginal likelihood: }\label{sec:marginal}
We estimate the model parameters by maximizing the marginal likelihood over $\Theta = \{f(\bm{X}), \bm{U}, \bm{V}\}$,
\begin{equation}\label{eq:marginal}
	\operatorname{p}(\bm{Y} \mid \Theta) = \int \operatorname{p}(\bm{Y} \mid \bm{N}, \Theta)\, \operatorname{p}(\bm{N} \mid \bm{U}, \bm{V})\;d\bm{N}.
\end{equation}
The nonlinear link $g(\cdot)$ and the non-conjugacy between the Poisson likelihood and the matrix-normal prior, however, render this integral intractable. Integrals from this model type are often computed using the Laplace approximation~\cite{rasmussen2008, cunningham2007, rue2009}, but this approach loses approximation accuracy in high-dimensions~\cite{shun1995, tierney1989}. We therefore propose two complementary estimation algorithms. The first (Section~\ref{sec:vem}) places no structural assumptions on $\bm{U}$ or $\bm{V}$ and recovers dense factor estimates via variational EM (VEM). The second restricts the Kronecker factors to a parametric kernel family, trading generality for scalability via composite likelihood (Section~\ref{sec:ktm}).

\section{Variational Expectation-Maximization (VEM)}\label{sec:vem}
As direct marginalization over $\bm{N}$ in~\eqref{eq:marginal} is intractable, we instead maximize a variational lower bound on $\log\operatorname{p}(\bm{Y} \mid \Theta)$~\cite{blei2017},
\begin{align}\label{eq:elbo}
	\mathcal{L}(\bm{M}, \bm{\Psi}_U, \bm{\Psi}_V) &= \mathbb{E}_{\operatorname{q}}\!\left[\log \operatorname{p}(\bm{Y} \mid \bm{N}, \bm{X})\right] - \operatorname{KL}\!\left(\operatorname{q}_\Psi(\bm{N}) \| \mathcal{MN}(\bm{0}, \bm{U}, \bm{V})\right) \notag \\
	&= \mathbb{E}_{\operatorname{q}}\!\left[\log \operatorname{Poisson}\!\left(\bm{Y} \mid g\!\left(f(\bm{X}) + \bm{N}\right)\right)\right] \\
    &- \operatorname{KL}\!\left(\mathcal{MN}(\bm{M}, \bm{\Psi}_U, \bm{\Psi}_V) \| \mathcal{MN}(\bm{0}, \bm{U}, \bm{V})\right). \notag
\end{align}
A standard mean-field approximation $\operatorname{q} = \prod_{n,t}q_{nt}$ factorizes across all $(n,t)$, collapsing $\bm{\Sigma}$ to a diagonal in the posterior and biasing the M-step toward the identity. Instead, we select a matrix-normal variational posterior and prevent collapse by the addition of an accumulating SCM in both $\bm{\Psi}_U$ and $\bm{\Psi}_V$ during the EM procedure~\cite{chiquet2019}. We compute the matrix-normal variational posterior,
$\operatorname{q}_\Psi(\bm{N}) = \mathcal{MN}_{N \times T}(\bm{M},\, \bm{\Psi}_U,\, \bm{\Psi}_V)$,
with variational mean $\bm{M} \in \mathbb{R}^{N \times T}$, Kronecker factors $\bm{\Psi}_U \in \Spp{N}$, $\bm{\Psi}_V \in \Spp{T}$, and $\mathbb{E}_{\operatorname{q}}$ denoting expectation under $\operatorname{q}_\Psi(\bm{N})$. We parameterize each factor by its Cholesky factor, reducing the per-iteration M-step cost to $\mathcal{O}(N^3+T^3)$ and constrain optimization to $\Spp{N}$ and $\Spp{T}$~\cite{boyd_convex_2004}.

\paragraph{E-step:} We first hold $\Theta$ constant and optimize over the variational parameters $(\bm{M}, \bm{\Psi}_U, \bm{\Psi}_V)$ via gradient ascent. Since the Poisson likelihood factorizes over each $(n,t,c)$ entry and only the marginal variance enters for each observed count (see Supplement~\ref{sec:supp_ghq}), the expected log-likelihood decomposes into $NTC$ independent 1-D integrals~\cite{aitchison1989a}, each approximated by $K$-point Gauss-Hermite quadrature (GHQ) with nodes and weights $\{z_k, w_k\}_{k=1}^K$.
\paragraph{M-step:} Following the E-step, $\Theta$ is updated via the flip-flop procedure~\cite{bien2011}, holding each Kronecker factor fixed while updating the other, until the maximum relative element-wise change in both $\widehat{\bm{U}}$ and $\widehat{\bm{V}}$ plateaus. Let $\bm{M}_c \in \mathbb{R}^{N \times T}$ denote the $c^{th}$ condition-slice of the posterior mean $\bm{M}$. Maximizing the ELBO with respect to $\bm{U}$ holding $\bm{V}$, $\bm{\Psi}_U$, $\bm{\Psi}_V$ fixed yields
\begin{equation}\label{eq:Uhat}
	\widehat{\bm{U}} = \frac{1}{TC}\sum_{c=1}^{C} \bm{M}_c\bm{V}^{-1}\bm{M}_c^\top + \frac{\operatorname{tr}(\bm{V}^{-1}\bm{\Psi}_V)}{C}\,\bm{\Psi}_U,
\end{equation}
with a symmetric update for $\widehat{\bm{V}}$ using the updated $\widehat{\bm{U}}^{-1}$ (full derivation in Supplement ~\ref{sec:supp_mstep}). The first term is a weighted sample covariance of the posterior means. The second propagates residual posterior uncertainty into the factor estimate, preventing shrinkage of $\widehat{\bm{U}}$ across EM iterations. After convergence, $\operatorname{tr}(\widehat{\bm{U}})$ is rescaled to $N$, resolving the scalar ambiguity of the Kronecker product. The Kronecker factorization reduces the per-iteration cost from $\mathcal{O}(N^3T^3)$ for direct inversion of $\bm{\Sigma} = \bm{V} \otimes \bm{U} \in \mathbb{R}^{NT \times NT}$ to $\mathcal{O}(N^3 + T^3)$ via independent Cholesky factorizations of $\bm{U}$ and $\bm{V}$, exploiting the mixed-product identity $(\bm{V} \otimes \bm{U})^{-1} = \bm{V}^{-1} \otimes \bm{U}^{-1}$.

\section{Kernel Tournament Method (KTM)}\label{sec:ktm}
In addition to the VEM fitting approach we, propose a complementary parametric method which estimates $\Theta$ by exploiting two observations: (i) tuning curves estimation can be reformulated as a composite likelihood optimization over pairs of observations rather than the full joint and (ii) domain knowledge of the experiment or stimulation can constrain the prior covariance structure. As the marginalized joint likelihood~\eqref{eq:marginal} is intractable, we can instead well approximate the likelihood by direct approximation of the integral~\cite{aitchison1989a} using GHQ (see Supplement~\ref{sec:supp_ghq2}). Experimental knowledge reformulates the estimation of $\bm{U}$ and $\bm{V}$ into an estimation over a small set of hyperparameters.

\paragraph{Composite likelihood:} Composite likelihood (CL) estimation replaces the intractable full joint $p(\mathbf{Y}\mid\Theta)$ with a product of lower-dimensional marginals. Because each marginal term is itself a valid likelihood, the CL estimator is consistent under standard regularity conditions~\cite{varin2011}. Bivariate marginals encode the dominant covariance structure of $\bm{U}$ and $\bm{V}$ while reducing each intractable $NT$-dimensional integral to a two-dimensional integral tractable by 2D GHQ (see Supplement~\ref{sec:supp_ghq2}). Let $\mathcal{S} = \{(a,b)\}$ be the set of index pairs with $a=(n,t)$ and $b=(n',t')$. The pairwise composite log-likelihood is:\begin{equation}\label{eq:cl}
	\operatorname{CL}(\Theta;\,\bm{Y},\,\mathcal{S})
	= \sum_{(a,\,b)\,\in\,\mathcal{S}} \log p(y_a,\,y_b \mid \Theta),
\end{equation}
where each bivariate marginal $p(y_a, y_b \mid \Theta)$ integrates the latent gain pair $(z_a, z_b)$ over the bivariate margin of $\mathcal{MN}(\mathbf{0}, \mathbf{U}, \mathbf{V})$ with additional details in Supplement \ref{sec:supp_ghq_conv}.

\paragraph{Stage 1 - tuning curve estimation:}
The tuning curves $\widehat{f} = \widehat{f(\bm{X})}$ are estimated by minimizing the negative $\operatorname{CL}(\Theta;\,\bm{Y},\,\mathcal{S})$ over only $f$, with $\bm{U}$ and $\bm{V}$ held at their initializations. 
Each bivariate marginal $p(y_a, y_b \mid \Theta)$ in Equation~\eqref{eq:cl} takes the same Poisson-mixture form as the marginal likelihood in Equation~\eqref{eq:marginal}, restricted to the pair $(a, b)$ and evaluated by 2D GHQ (see Supplement~\ref{sec:sup_KTM_1} for additional details).

\paragraph{Stage 2 - kernel family selection}
Having computed $\widehat{f}$, variance-stabilized residuals are aggregated into a neural-axis SCM $\bm{S}_U$ and time-axis SCM $\bm{S}_V$ by averaging over columns and rows respectively. Each axis is routed independently to either a low-rank (LR) or full-rank (FR) structural family. For LR, low-rankness is detected via the method in~\cite{saibaba2023} and the spectral gap. The detected rank seeds a truncated eigendecomposition as the candidate matrix. For FR, the kernel dictionary consists of Mat\'{e}rn ($\nu\in\{1.5,2.5,\infty\}$), with $\infty$ the squared-exponential and periodic families, selected by minimizing the masked Frobenius loss over the off-diagonal entries of the residual SCM,\begin{equation}\label{eq:wishart_pl}
	\mathcal{L}_{W}(\bm{K};\,\bm{S})
	= \left\|\bm{D}\odot\bigl(\bm{S} - \hat{a}\bm{K}\bigr)\right\|_{F}^{2},
	\quad
	\hat{a} = \frac{\langle \bm{D}\odot\bm{S},\,\bm{D}\odot\bm{K}\rangle}
	{\|\bm{D}\odot\bm{K}\|_{F}^{2}},
\end{equation}
where $\bm{D} = \bm{1}\bm{1}^{\top} - \bm{I}$ masks the diagonal to measure the off-diagonal features. A periodicity gate based on the lag-averaged correlation function suppresses periodic fits when spectral evidence is insufficient. Each candidate is scouted over a fixed number of iterations minimizing Equation~\eqref{eq:cl} on $\mathcal{S}$ with free kernel hyperparameters (see Supplement~\ref{sec:sup_KTM_2} for additional details).

\paragraph{Stage 3 - joint refinement}
The winning family $k^{*}$ and its scouted factors $\bm{U}_{k^{*}}$ and $\bm{V}_{k^{*}}$ are passed to a joint refinement of $(\hat{f},\,\bm{U}_{k^{*}},\,\bm{V}_{k^{*}})$ by minimizing $-\operatorname{CL}(\Theta;\,\bm{Y},\,\mathcal{S})$ over all parameters simultaneously until convergence (see Supplement~\ref{sec:sup_KTM_3} for additional details).

\section{Results on Simulated data}\label{sec:sim}
We first assessed recovery of the Kronecker factors on simulated data sampled from the generative model with dimensions $N{=}45$, $T{=}12$, $C{=}75$ (matched to real data). Observed data is first variance stabilized by the standard square-root transform~\cite{yu2008a, bartlett_square_1936}. We assessed recovery by relative Frobenius error $\varepsilon_F(\hat{\bm{U}})=\fronorm{\widehat{\bm{U}} - \bm{U}} /\fronorm{\bm{U}}$ as the primary metric and off-diagonal correlation $\rho(\widehat{\bm{U}})$ as a secondary metric of structural fidelity with hardware details in Supplement ~\ref{sec:supp_machine}.

We evaluated VEM and KTM across the four covariance regimes formed by independently assigning $\bm{U}$ and $\bm{V}$ to either a low-rank (LR) or Mat\'{e}rn (M) ground truth. Because cortical populations are often low-dimensional, we focus on the regime where $\bm{U}$ is LR and $\bm{V}$ is M (for detailed results see Supplement~\ref{sec:supp_proposebench}). 

 \begin{figure}[!t]
     \centering
     \includegraphics[width=\linewidth]{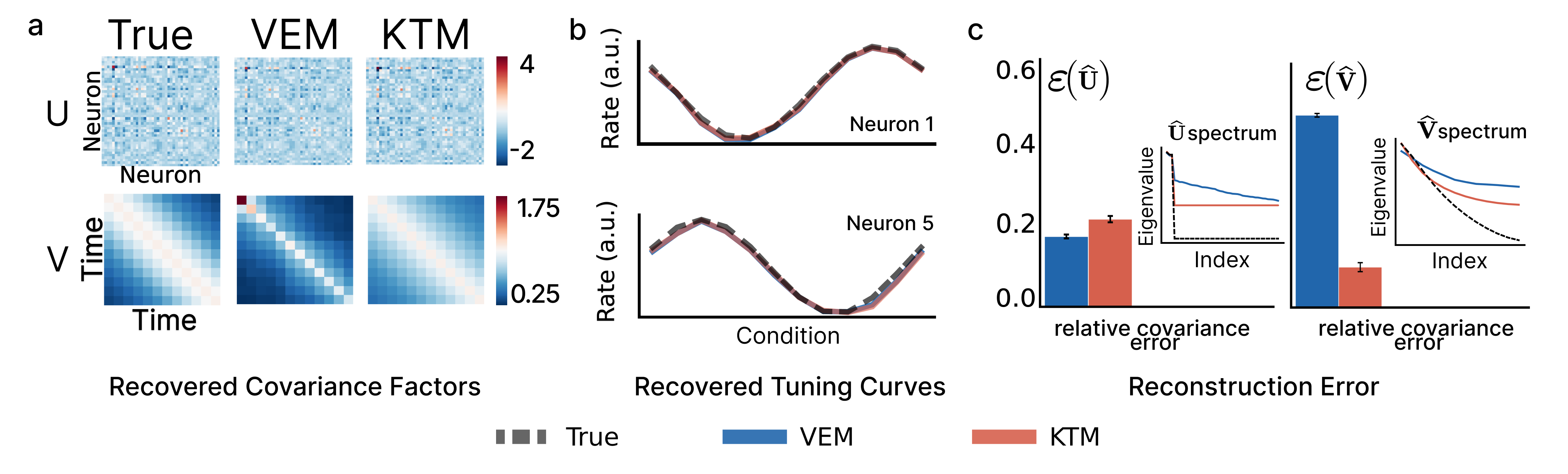}
     \caption{Covariance and tuning curve recovery in the LR$\times$M regime. (a) Heatmaps of ground truth, VEM, and KTM estimates of the neural factor $\bm{U}$ (top) and temporal factor $\bm{V}$ (bottom) with inset eigenspectra (black: true, blue: VEM, red: KTM). b) Estimated tuning curves $\widehat{f}(\bm{X})$ for four example neurons in no order. (c) Relative Frobenius error $\varepsilon_F(\widehat{\bm{U}})$ (top) and $\varepsilon_F(\widehat{\bm{V}})$ (bottom) averaged over $n=12$ seeds (error bars: $\pm$ SEM). VEM achieves lower error on the lower-rank neural factor; KTM achieves lower error on the stationary temporal factor.}
     \label{fig:sim}
     \vspace{-0.5cm}
 \end{figure}

Both methods recovered the tuning curves to a similar degree of accuracy ($\varepsilon_F(\widehat{f}) \approx 10\%$), indicating that both algorithms can accurately capture the neural population's stimulus encoding while also providing estimates of the covariance structure. The two methods diverged in their recovery of the individual Kronecker factors consistent with the Kronecker non-identifiability (Section~\ref{sec:background}). VEM, which places no parametric constraint on the factors, recovered $\bm{U}$ more accurately than KTM ($17.0 \pm 0.5\%$ vs.\ $21.3 \pm 0.8\%$) and off-diagonal correlation $\rho\left(\widehat{\bm{U}}\right){=}99.1\%\pm0.1\%$. We interpret this to  suggest that  the unconstrained matrix-normal variational posterior is better suited to the low-rank geometry of $\bm{U}$. Conversely, KTM recovered the temporal factor $\bm{V}$ more accurately than VEM ($9.9 \pm 1.1\%$ vs.\ $47.3 \pm 0.4\%$). This outcome is consistent with the Kronecker non-identifiability as KTM optimizes the kernel directly while VEM jointly optimizes the latent covariance matrix and firing rate (Table~\ref{tab:compare}). This pattern is further supported by KTM's identified kernel spectral decay closely tracking the ground truth (Fig.~\ref{fig:sim}, inset), while VEM approximates the spectral structure well.

\begin{table}[!t]
\caption{Comparison with $N{=}45$, $C{=}12$, $T{=}75$ between means over 12 seeds. Best per row in bold}
\label{tab:compare}
\centering
\small
\begin{tabular}{llcccccc}
\textbf{prior structure}& & \multicolumn{6}{c}{\textbf{method mean relative error}} \\
& & VEM & KTM & GPLVM & Goris & GPFA & MLE (GH) \\
\cline{3-8}
\noalign{\vskip 1.5pt}
\cline{3-8}
  & $\varepsilon(\widehat{\mathbf{U}})$ & \textbf{0.173} & 0.213 & 0.966 & 0.360 & 1.296 & 0.933 \\
 \textit{LR-Mat} & $\varepsilon(\widehat{\mathbf{V}})$ & 0.479 & \textbf{0.069} & 0.834 & 0.451 & 0.483 & 0.908 \\
  & $\varepsilon(\hat{f})$ & 0.127 & \textbf{0.098} & 0.175 & 0.470 & 0.470 & 0.107 \\
\cline{3-8}
\end{tabular}
\end{table}

\section{Results on visual coding dataset}\label{sec:real}

 \begin{figure}[!t]
     \centering
     \includegraphics[width=\linewidth]{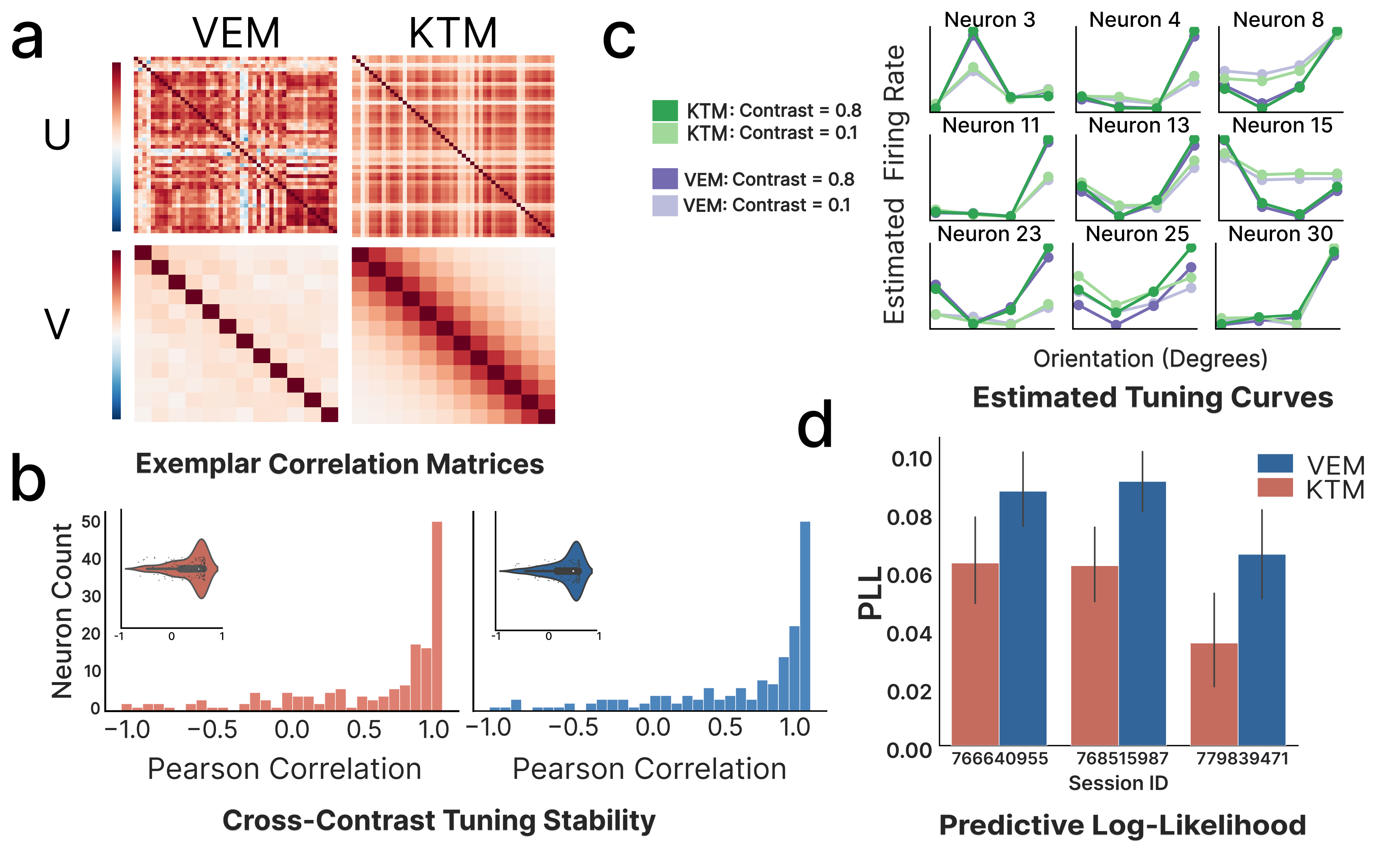}
     \caption{VEM and KTM applied to the Allen Institute Visual Coding dataset from VISp at contrast $0.8$ and $90^\circ$ orientation. (a) Estimated latent covariance matrices normalized to show correlation of neuronal and temporal dimensions. (b) histogram of cross-contrast tuning stability in VISp. (c) Example estimated tuning curves. (d) PLL of VEM and KTM.}
     \label{fig:visp}
     \vspace{-0.5cm}
 \end{figure}

To demonstrate the utility of PMNLV for studying neural population dynamics, we next applied VEM to Neuropixels recordings from the Allen Institute Visual Coding project. This dataset contains simultaneous recordings of single unit activity from distinct regions of mouse visual processing~\cite{siegle_survey_2021}. During recording, mice were head-fixed and free to run on a wheel while visual stimuli were presented. We focus on four consecutive regions along the visual hierarchy: dorsal lateral geniculate nucleus (LGd), primary visual area (VISp), rostrolateral visual area (VISrl), and anteromedial visual area (VISam). We isolate recordings taken during the presentation of drifting grating stimuli of 4 orientations (0$^{\circ}$, 45$^{\circ}$, 90$^{\circ}$, 135$^{\circ}$) and 2 contrast levels (0.1, 0.8) presented for 2~s and repeated 75 times. To account for differences in visual processing latency, we excluded the first 200~ms of stimulus presentation and binned the subsequent 1.8~s of spike counts into 150~ms windows. We strictly included units that passed standardized spike sorting quality control metrics (ISI violations $< .5$, presence ratio $>.95$, amplitude cutoff $<.1$) and had mean firing rate $>2.0$~Hz. We selected the three sessions with the most isolated units that had $\geq20$ units per brain region.

We first validated PMNLV by evaluating its held-out predictive log-likelihood (PLL) against a baseline mean-firing rate model with a 70-30 data split for each session/stimulus (Fig.~\ref{fig:visp}).
We then subtracted the log-likelihood of the baseline PMNLV model, parameterized by the empirical mean firing rate of each neuron in each time-bin, from our model’s log-likelihood. We found that VEM and KTM consistently outperform the baseline by $.067 - .090$ and $.036 - .065$ nats per observation, respectively (Fig.~\ref{fig:visp}d). Moreover, both VEM and KTM recover plausible orientation-selective tuning curves whose baseline preferences remain stable across contrast conditions (Fig.~\ref{fig:visp}c). Together, these results indicate that our model is both generalizable and captures realistic orientation selective tuning curves in VISp phenomena~\cite{niell_highly_2008}.

 \begin{figure}[!t]
     \centering
     \includegraphics[width=\linewidth]{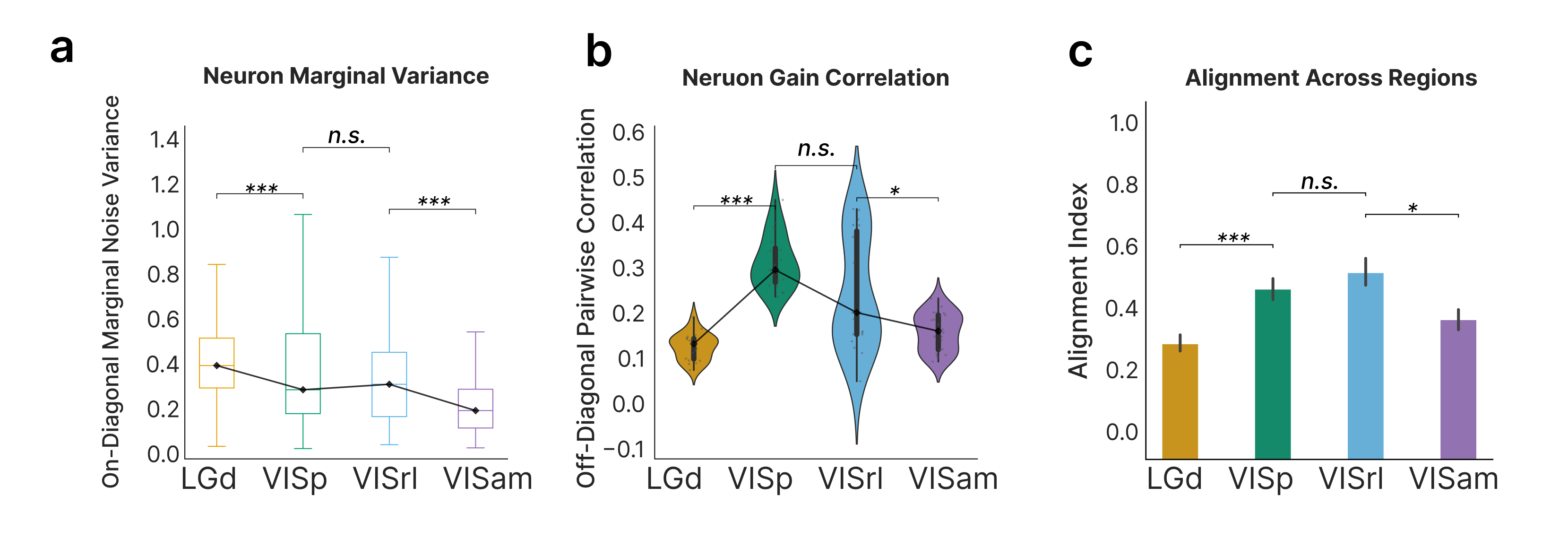}
     \caption{VEM applied to the Allen Institute Visual Coding dataset across four connected brain regions. (a) Marginal variance across regions with no significant difference between VISp and VISrl. (b) Off-diagonal correlation significant differences from LGd to VISp and VISrl to VISam (c) Alignment between neural activity and $\hat{\bm{U}}$ ***, p < 0.001, *, p < 0.05, computed by Wilcoxon rank-sum test.}
     \label{fig:hierarchy}
     \vspace{-0.5cm}
 \end{figure}

We next investigated the question: how does population variability change across the visual hierarchy? Consistent with past work~\cite{akella_deciphering_2025}, we found that the marginal variances of our model, which represent single-neuron variability, change minimally along the hierarchy (Fig.~\ref{fig:hierarchy}a). However, we are also uniquely positioned to explore how population covariance evolves along the hierarchy. To do so, we  calculated the mean absolute value across the off-diagonal terms of the neuron gain correlation matrices (Fig.~\ref{fig:hierarchy}b). We find that, in contrast to single-neuron variability, shared co-variability is highly region dependent, with gain correlation initially low in the thalamus (LGd), peaking in VISp, and then systematically decreasing into higher-order visual areas (VISrl, VISam). This pattern suggests that while individual neurons remain similarly variable, ascending the cortical hierarchy decorrelates the population level co-variability~\cite{siegle_survey_2021, jia_modules_2022, montjin_populationcodes_2016}. 

Finally, we explored the difference between co-variability in neural activity and neural gain (see Supplement~\ref{sec:supp_subspace}). Specifically, we computed the alignment index~\cite{elsayed_reorganization_2016} between the covariance of the neural activity and the first 10 principal components of the recovered neuron gain covariance matrices. As with the total neural gain co-variability, alignment begins low in LGd, peaks in early stages of visual cortex (VISp and VISrl), and then declines in higher order cortex (VISam). We note that, generally, the subspaces defining the neural activity and neuron noise covariance are substantially misaligned (alignment index $<0.6$). This discrepancy indicates that the drivers of neural gain covariance may operate separately from the activity covariance. 

\section{Conclusion}\label{sec:conclusion}
We introduce here the Poisson matrix-normal latent variable (PMNLV) model for population-level overdispersion. PMNLV is a generative framework that simultaneously respects the observed Poisson spiking statistics and the correlated gain fluctuations across neurons and time. We derive two complementary estimation algorithms: (1) VEM, which recovers dense Kronecker factors without structural assumptions and (2) KTM, a data-driven method that constrains the prior to a parametric family. We show on synthetic data that both the VEM and KTM outperform standard methods in recovering the Kronecker factors. 

By applying our methods to simulated data reflective of empirical neural recordings, we demonstrate that both methods can accurately recover ground truth covariance matrices for the matrix normal prior. We found that when structural priors are clear, the KTM method produces an improved covariance estimate over VEM. Otherwise, when no structural priors can be assumed, VEM recovers both the tuning curves and covariance matrices of the population. Both methods only require that the data be simultaneously recorded across neurons and time-bins without assumptions on the stimulus design. 

In applying the PMNLV model to mouse visual data, we demonstrate that our proposed methods capture realistic single-neuron tuning curves as well as exhibit generalizability through PLL validation. We also show, through the estimated correlation, that the within brain region variance tends to show little change between VISp and VISrl regions and decreases across the hierarchy. Although the change between VISp and VISrl is not significant, we found that the noise co-variability in spiking data surprisingly increases after LGd. Finally, by applying these methods to study the geometry of gain covariance, we also show that, unexpectedly, the noise covariance is misaligned with the covariance of the neural activity. In sum, the PMNLV model extends findings from studies on single-neuron variability to a broader context. Future work will extend both methods to quantify uncertainty over the recovered Kronecker factors. 

\paragraph{Limitations:}\label{sec:limitations}
The Kronecker factorization $\bm{\Sigma} = \bm{V}\otimes\bm{U}$ assumes separability of neuronal and temporal covariance and so neuron-specific temporal dynamics are not captured. However, taking time-bins small enough, the neural activity can be taken to be approximately stationary which is the convention for many neuroscience applications~\cite{aghamohammadi2025, brown_multiple_2004}. Nevertheless, reliable factor recovery relies on sufficient observations across each observed dimension to provide good sample covariance estimates. Both VEM and KTM operate in this regime. For VEM specifically, the Cholesky factorization reduces the computational cost per M-step iterate but on a CPU this can become a bottleneck for large $N$ or large $T$ which can be resolved using a GPU or use of the KTM method. 

\paragraph{Broader Impact:}\label{sec:broaderimpact}
This work introduces a basic science model for characterizing population-level gain covariance in simultaneously recorded neural populations. We envision the primary beneficiaries as experimental neuroscientists seeking to quantify how shared variability is organized across the brain.
We do not foresee any direct negative societal impacts as the methods operate on anonymized electrophysiological recordings from non-human animal subjects and involve no personal data or decision-making systems. The main indirect risk arises if the model is applied to human neural recordings in a clinical context, where mischaracterized covariance could support incorrect inferences about neural state. 

Ethical aspects and future societal consequences do not apply to this work.

\begin{ack}
This work was supported by NIH grants R01DC002390, R01DC018061, T32DC000023, T32GM136577, and NSF CAREER Award 2340338. 
\end{ack}

\printbibliography

\newpage
\cleardoublepage
\part*{Supplement}
\addcontentsline{toc}{part}{Appendix}

\appendix

\renewcommand{\theequation}{A\arabic{equation}}
\renewcommand{\thefigure}{A\arabic{figure}}
\renewcommand{\thetable}{A\arabic{table}}
\setcounter{equation}{0}
\setcounter{figure}{0}
\setcounter{table}{0}

\section{Gauss--Hermite Quadrature}\label{sec:supp_ghq}

The nonlinear link $g(\cdot) = \mathrm{softplus}(\cdot)^2$ makes the expected log-likelihood and the bivariate marginals in~\eqref{eq:cl} analytically intractable. We approximate both with Gauss--Hermite quadrature (GHQ)~\cite{abramowitz1965}. For a function $f$ weighted against a Gaussian kernel, the $K$-point rule is
\begin{equation}\label{eq:ghq_generic}
  \frac{1}{\sqrt{\pi}}\int_{-\infty}^{\infty} f(\xi)\,e^{-\xi^2}\,d\xi
  \;\approx\;
  \frac{1}{\sqrt{\pi}}\sum_{k=1}^{K} w_k\,f(\xi_k),
\end{equation}
where $\{\xi_k,w_k\}_{k=1}^K$ are the standard Hermite nodes and weights.

\subsection{1-D quadrature (VEM)}\label{sec:supp_ghq1}

Each entry of the expected log-likelihood under $q_\Psi$ is a scalar Gaussian integral. The substitution $z = \mu_z + \sqrt{2}\,\sigma\,\xi$ with $\mu_z = [f(\bm{X})]_{rc} + [\bm{M}]_{rc}$ and $\sigma^2 = [\bm{\Psi}_U]_{rr}[\bm{\Psi}_V]_{cc}$ reduces it to~\eqref{eq:ghq_generic}. The full expected log-likelihood sums these over all $(r,c)$ entries and conditions.

\subsection{2-D quadrature (KTM)}\label{sec:supp_ghq2}

Each bivariate marginal $p(y_a,y_b \mid \Theta)$ in~\eqref{eq:cl} integrates the Poisson likelihood against a bivariate Gaussian over the latent pair $(z_a,z_b)$ with $a=(n,t)$, $b=(n',t')$. The $2\times 2$ marginal covariance is
\begin{equation}\label{eq:biv_marginal}
  \bm{\Sigma}_{ab}
  = \begin{pmatrix}
      [\bm{U}]_{nn}[\bm{V}]_{tt}     & [\bm{U}]_{nn'}[\bm{V}]_{tt'} \\[3pt]
      [\bm{U}]_{nn'}[\bm{V}]_{tt'}   & [\bm{U}]_{n'n'}[\bm{V}]_{t't'}
    \end{pmatrix},
\end{equation}
following from $\bm{\Sigma} = \bm{V}\otimes\bm{U}$. Let $\bm{L}$ be the Cholesky factor of $\bm{\Sigma}_{ab}$. The substitution $\bm{z} = \bm{\mu}_{ab} + \sqrt{2}\,\bm{L}\,\bm{\xi}$ absorbs the Gaussian density into $e^{-\|\bm{\xi}\|^2} = e^{-\xi_1^2}e^{-\xi_2^2}$, giving the tensor-product rule
\begin{equation}\label{eq:2d_ghq}
  p(y_a,y_b \mid \Theta)
  \;\approx\;
  \frac{1}{\pi}
  \sum_{k_1=1}^{K}\sum_{k_2=1}^{K}
  w_{k_1}\,w_{k_2}\;
  \prod_{j \in \{a,b\}}
  \frac{g(\tilde{z}_j)^{y_j}\,e^{-g(\tilde{z}_j)}}{y_j!},
\end{equation}
where $\tilde{\bm{z}} = \bm{\mu}_{ab} + \sqrt{2}\,\bm{L}(\xi_{k_1},\xi_{k_2})^\top$.

\subsection{Convergence of the quadrature approximation}\label{sec:supp_ghq_conv}

To avoid the combinatorial hurdle of enumerating all neuron pairs $(n,n')$, pairs are subsampled from the full tensor slice. Sampling probability was is weighted by $\ell_s$ and decays in proportion to $\exp(-d_{nn'}/\ell_s)$, concentrating pairs on units with stronger local covariance structure. Similarly, time pairs are weight by $\tau$ and restricted to a window of $|t-t'|\leq\tau$. We verify that the quadrature approximations converge at low node counts. For a representative synthetic dataset ($N=45$, $T=60$, $C=75$), we computed the 1D expected log-likelihood and the 2D bivariate marginal~\eqref{eq:2d_ghq} at $K = 2,3,\ldots,20$ and measured the relative error against a reference at $K=30$.

\begin{figure}[H]
  \centering
  \includegraphics[width=0.55\linewidth]{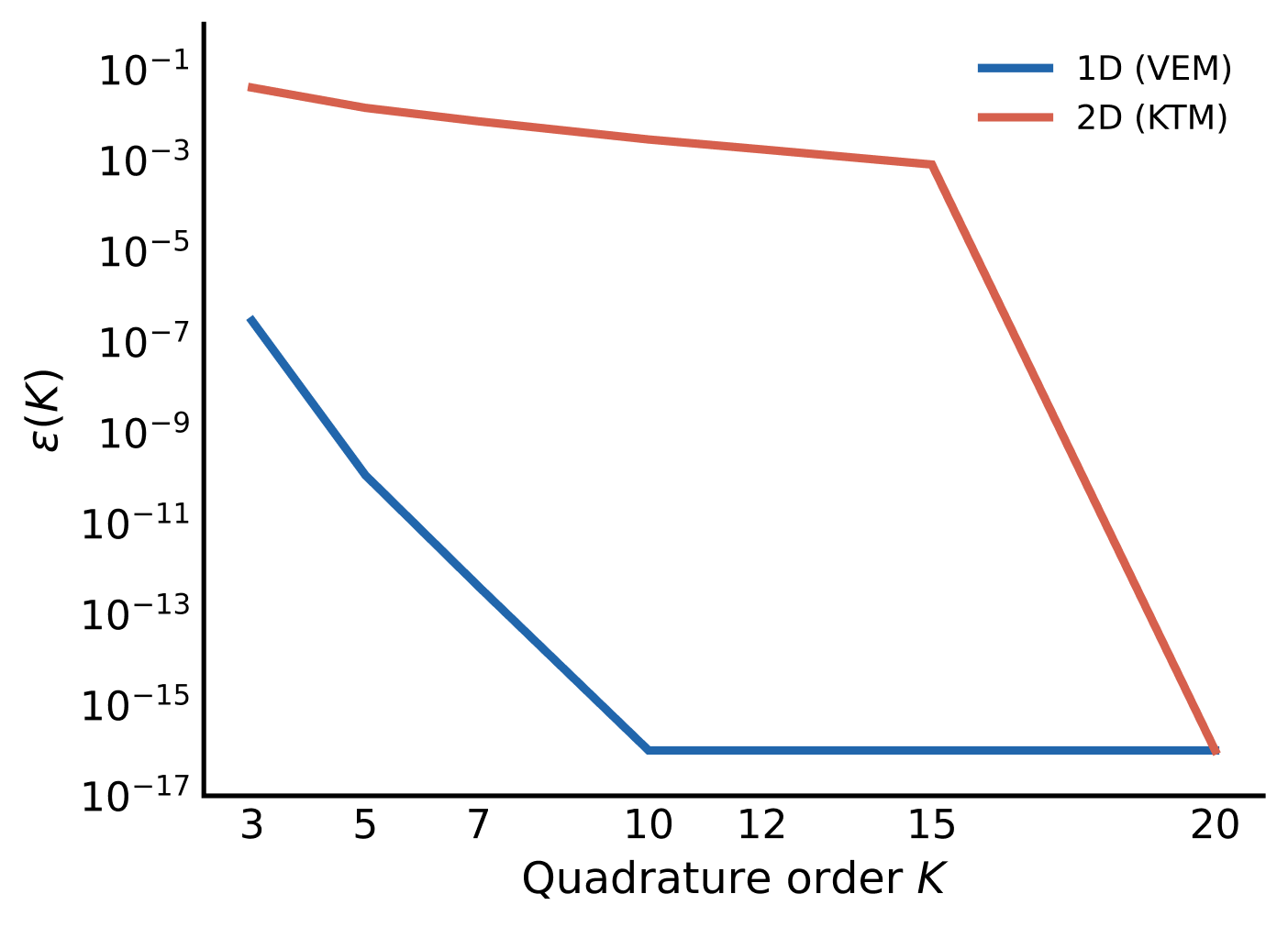}
\caption{Relative quadrature error $\varepsilon(K)$ on the M-R scenario. The univariate rule (VEM) reaches machine precision by $K{=}7$; the bivariate tensor-product rule (KTM) reaches $\varepsilon < 10^{-3}$ by $K{=}15$. All experiments use $K{=}10$.}
  \label{fig:gh_convergence}
\end{figure}

Figure~\ref{fig:gh_convergence} shows the quadrature relative error $\varepsilon(K)$ evaluated at converged parameters on the scenario from section~\ref{sec:sim}. All experiments in the main text use $K{=}10$, which is conservative for VEM and yields sub-percent accuracy for KTM.

\section{Machine Information}\label{sec:supp_machine}

All experiments were run in Python 3.12 on a Windows 11 workstation equipped with an Intel Core i7 CPU and an NVIDIA RTX 4070 SUPER GPU in \texttt{float64} precision. Total wall-clock time to convergence was $7$--$31$~s (VEM) and $34$--$37$~s (KTM) across kernel regimes at $N{=}45$, $T{=}75$, $C{=}12$. On cpu $74-164$~s (VEM)

\section{Variational Expectation Maximization}\label{sec:supp_vem}
As seen in section~\ref{sec:vem} the VEM is used to jointly estimate the parameters of the tuning curve, and Kronecker factors. In this section we derive the approximation for the evidence term, the closed form KL divergence term and the M-step. The E-step optimizes $(M_c, L_{\Psi_U}, L_{\Psi_V})$ jointly via Adam ($\eta = 0.05$, up to 100 iterations, relative tolerance $10^{-4}$) with Cholesky diagonal jittered to maintain positive definiteness.

\subsection{Evidence term}\label{sec:supp_evidence}
Because the Poisson likelihood factorizes over $(c, r, t)$, the expectation reduces to a sum of scalar integrals over the marginal of each entry of $N_c$. Under $q_\Psi$, each entry is Gaussian with mean $[M_c]_{rt}$ and marginal variance

\begin{equation}\label{eq:supp_margvar}
  \sigma^2_{rt} = [\Psi_U]_{rr}\,[\Psi_V]_{tt}.
\end{equation}

The expected log-likelihood therefore reduces to

\begin{equation}\label{eq:ell_decomp}
  \E[q_{\Psi}]{\log p(Y \mid N, \Theta)}
  = \sum_{c,r,t} \phi\!\left([M_c]_{rt},\,\sigma^2_{rt}\right),
\end{equation}

where each $\phi$ is a scalar Gaussian expectation evaluated by $K$-point GHQ (see \S\ref{sec:supp_ghq}).

\subsection{KL divergence term}\label{sec:supp_kl}
A short derivation of the closed form Kullback-Leibler divergence for the matrix-normal term is provided below.  
 
\begin{proposition}[KL]
\label{prop:kl}
Let $q_c = \MatNormal{R}{T}{M_c}{\Psi_U}{\Psi_V}$ and $p_c = \MatNormal{R}{T}{0}{U}{V}$. Then
\begin{align}\label{eq:kl}
  \sum_{c=1}^{C}\KL{q_c}{p_c}
  &= \frac{1}{2}\Bigg[
      \sum_{c=1}^{C}
        \tr\!\paren{V^{-1}M_c^\top U^{-1}M_c}
      + C\!\cdot\!
        \tr\!\paren{U^{-1}\Psi_U}
        \cdot
        \tr\!\paren{V^{-1}\Psi_V}
      \notag\\[-6pt]
      &\qquad\quad
      - CRT
      + CT\logdet U
      + CR\logdet V
      - CT\logdet\Psi_U
      - CR\logdet\Psi_V
    \Bigg].
\end{align}
\end{proposition}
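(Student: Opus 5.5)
The approach is to reduce each matrix-normal to its vectorized Gaussian, apply the standard closed-form KL between two multivariate normals, and then collapse every $NT$-dimensional quantity back onto the Kronecker factors. Using the convention $\mathrm{vec}(\bm{N}_c)\sim\mathcal{N}(\mathrm{vec}(\bm{M}_c),\,\Psi_V\otimes\Psi_U)$ under $q_c$ and $\mathrm{vec}(\bm{N}_c)\sim\mathcal{N}(\bm{0},\,V\otimes U)$ under $p_c$, consistent with $\bm{\Sigma}=\bm{V}\otimes\bm{U}$ in Section~\ref{sec:gen_model}, each summand is
\begin{equation*}
\KL{q_c}{p_c}=\tfrac12\Big[\tr\!\big((V\otimes U)^{-1}(\Psi_V\otimes\Psi_U)\big)+\mathrm{vec}(M_c)^\top (V\otimes U)^{-1}\mathrm{vec}(M_c)-RT+\logdet(V\otimes U)-\logdet(\Psi_V\otimes\Psi_U)\Big].
\end{equation*}
The proof is then a term-by-term simplification followed by summation over $c$.

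The simplification uses three Kronecker identities. First, by the mixed-product rule already noted in Section~\ref{sec:vem}, $(V\otimes U)^{-1}(\Psi_V\otimes\Psi_U)=(V^{-1}\Psi_V)\otimes(U^{-1}\Psi_U)$, and since $\tr(A\otimes B)=\tr A\,\tr B$ the trace term becomes $\tr(U^{-1}\Psi_U)\tr(V^{-1}\Psi_V)$. Second, the identity $(B^\top\otimes A)\mathrm{vec}(X)=\mathrm{vec}(AXB)$ together with $\mathrm{vec}(X)^\top\mathrm{vec}(Y)=\tr(X^\top Y)$ turns the quadratic form into $\tr(M_c^\top U^{-1}M_cV^{-1})$. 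Cyclic invariance rewrites this as $\tr(V^{-1}M_c^\top U^{-1}M_c)$. Third, $\logdet(A\otimes B)=m\logdet A+n\logdet B$ for $A$ of size $n\times n$ and $B$ of size $m\times m$. With $U,\Psi_U$ of size $R\times R$ and $V,\Psi_V$ of size $T\times T$, this gives $\logdet(V\otimes U)=R\logdet V+T\logdet U$, and the analogous expression for $\Psi$.

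Summing over $c=1,\dots,C$, the mean terms stay inside the sum. All remaining terms are independent of $c$, because $\Psi_U$, $\Psi_V$, $U$, $V$ are shared across conditions, so each picks up a factor $C$. This yields exactly \eqref{eq:kl}, including $-CRT$ and the coefficients $CT$ and $CR$ on the log-determinants.

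The main obstacle is bookkeeping rather than any conceptual step. One must keep the vec/Kronecker ordering convention consistent so that the row factor $U$ pairs with $M_c$ on the left and $V$ with $M_c$ on the right. One must also not swap the dimension multipliers in the log-determinant identity: $U$ is $R\times R$ and so receives the multiplier $T$. I would verify both points on the scalar case $R=T=1$ and on a diagonal-$V$ sanity check. Positive definiteness of all four factors, guaranteed by membership in $\Spp{R}$ and $\Spp{T}$, justifies the inverses and log-determinants.
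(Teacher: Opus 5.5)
Your proposal is correct and follows essentially the same route as the paper. The paper cites the closed-form matrix-normal KL, collapses the trace term via the mixed-product identity and $\tr(A\otimes B)=\tr A\,\tr B$, and then sums over $c$ using the shared $\Psi_U,\Psi_V$. The only difference is that you derive the cited formula yourself from the vectorized Gaussian KL, using the vec quadratic-form and Kronecker log-determinant identities, which makes your argument self-contained.
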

\begin{proof}
The KL between two matrix-normals with the same dimensions is
(see, e.g., \cite{zotero-item-3623}, Ch.~2)
\begin{align}
  \KL{q_c}{p_c}
  &= \frac{1}{2}\Bigg[
       \tr\!\paren{(\Psi_V \otimes \Psi_U)(V^{-1} \otimes U^{-1})}
       + \tr\!\paren{V^{-1}M_c^\top U^{-1}M_c}
       \notag\\
       &\quad - RT
       + T\logdet U + R\logdet V
       - T\logdet\Psi_U - R\logdet\Psi_V
     \Bigg].
\end{align}
Applying the mixed-product Kronecker identity
$\tr(B^\top \otimes A) = \tr(A)\tr(B)$ to the first trace,
\begin{equation}
  \tr\!\paren{(\Psi_V \otimes \Psi_U)(V^{-1} \otimes U^{-1})}
  = \tr\!\paren{(V^{-1}\Psi_V)\otimes(U^{-1}\Psi_U)}
  = \tr\!\paren{U^{-1}\Psi_U}\,\tr\!\paren{V^{-1}\Psi_V}.
\end{equation}
Since $\Psi_U$ and $\Psi_V$ are shared across conditions, summing over
$c = 1,\ldots,C$ gives~\eqref{eq:kl}.
\end{proof}
The Cholesky-based expressions compute each term without forming inverses explicitly: $\tr(U^{-1}\Psi_U) = \fronorm{L_U^{-1}L_{\Psi_U}}^2$, $\sum_c\tr(V^{-1}M_c^\top U^{-1}M_c) = \sum_c\fronorm{L_U^{-1}M_c L_V^{-\top}}^2$.

\subsection{M-step}\label{sec:supp_mstep}
The M-step maximises the ELBO over $\Theta = \{U, V, f\}$ with $(M_{1:C}, \Psi_U, \Psi_V)$ fixed. Isolating the $U$-dependent terms of $-\KL{q}{p}$:
\begin{equation}\label{eq:U_obj}
  \ell(U)
  = -\frac{CT}{2}\logdet U
  - \frac{1}{2}\tr\!\paren{U^{-1}\mathbf{S}_U},
\end{equation}
where
\begin{equation}\label{eq:SU}
  \mathbf{S}_U
  = \sum_{c=1}^{C} M_c V^{-1} M_c^\top
    + C\cdot\tr\!\paren{V^{-1}\Psi_V}\cdot\Psi_U.
\end{equation}
Setting $\partial\ell/\partial U^{-1} = 0$ in a manner analogous to~\cite{chiquet2019}:
\begin{equation}\label{eq:Uhat_supp}
  \est{U}
  = \frac{1}{CT}\sum_{c=1}^{C} M_c V^{-1} M_c^\top
  + \frac{\tr\!\paren{V^{-1}\Psi_V}}{T}\,\Psi_U.
\end{equation}
The first term is a weighted sample covariance across conditions; the second is a
full-matrix correction proportional to $\Psi_U$, scaled by
$\tr(V^{-1}\Psi_V)/T$. Symmetrically, with $\est{U}$ substituted:
\begin{equation}\label{eq:Vhat_supp}
  \est{V}
  = \frac{1}{CN}\sum_{c=1}^{C} M_c^\top \est{U}^{-1} M_c
  + \frac{\tr\!\paren{\est{U}^{-1}\Psi_U}}{N}\,\Psi_V.
\end{equation}

\section{Kernel Tournament Method}\label{sec:supp_KTM}

\subsection{Stage 1 optimization.}\label{sec:sup_KTM_1}
Parameters $(\widehat{f}, L_U, L_V)$ are fit by minimizing $-\operatorname{CL}(\Theta;\bm{Y},\mathcal{S})$ via Adam ($\eta = 0.05$, $300$ steps) followed by L-BFGS with strong Wolfe line search ($20$ inner iterations, $\Delta_{\mathrm{tol}} = 10^{-7}$).

\subsection{Stage 2 optimization.}\label{sec:sup_KTM_2}
Each kernel candidate is scouted with $\widehat{f}$ fixed via Adam ($\eta = 0.04$, $300$ steps) followed by L-BFGS ($15$ inner iterations, $\Delta_{\mathrm{tol}} = 10^{-7}$). The winning family $k^{*}$ is selected by BIC-penalized negative composite likelihood evaluated on $\mathcal{S}$.

\subsection{Stage 3 optimization.}\label{sec:sup_KTM_3}
The selected kernel is refined with $\widehat{f}$ fixed via Adam ($\eta = 0.02$, up to $2000$ steps, $r_{\mathrm{tol}} = 10^{-5}$) followed by L-BFGS ($25$ inner iterations, $\Delta_{\mathrm{tol}} = 10^{-8}$), yielding the final estimates $(\widehat{\bm{U}}_{k^{*}}, \widehat{\bm{V}}_{k^{*}})$.

\section{Extended Simulated Results}\label{sec:supp_extended}
In this section we expand upon the benchmark results shown in section~\ref{sec:sim} to demonstrate performance across all combinations of the Kronecker factors for the proposed methods as well as across comparable methods.

\subsection{Proposed Method Benchmarks}\label{sec:supp_proposebench}

\begin{table}[H]
	\caption{Recovery of Kronecker factors and tuning curves across all four covariance
		regimes ($n=12$ seeds, mean $\pm$ SEM). The highlighted regime corresponds to the
		primary result reported in the main text.}
	\label{tab:supp_sim}
	\centering
	\small
	\begin{tabular}{llcc}
		\textit{prior structure} & & VEM & KTM \\
		\cline{3-4}
		\noalign{\vskip 1.5pt}
		\cline{3-4}
		& $\varepsilon(\widehat{\bm{U}})$ & $\mathbf{12.9\pm0.6}$ & $21.1\pm1.2$ \\
		\textit{LR$\times$LR} & $\varepsilon(\widehat{\bm{V}})$ & $45.4\pm0.4$ & $\mathbf{23.2\pm0.9}$ \\
		& $\varepsilon(\widehat{f})$ & $\mathbf{13.2\pm0.8}$ & $17.5\pm0.5$ \\
		\cline{3-4}
		\rowcolor{highlightyellow} & $\varepsilon(\widehat{\bm{U}})$ & $\mathbf{17.4\pm0.4}$ & $21.3\pm0.6$ \\
		\rowcolor{highlightyellow}\textit{LR$\times$M} & $\varepsilon(\widehat{\bm{V}})$ & $48.0\pm0.6$ & $\mathbf{4.5\pm0.8}$ \\
		\rowcolor{highlightyellow} & $\varepsilon(\widehat{f})$ & $12.6\pm0.8$ & $\mathbf{11.9\pm0.7}$ \\
		\cline{3-4}
		& $\varepsilon(\widehat{\bm{U}})$ & $15.9\pm0.4$ & $\mathbf{4.8\pm0.9}$ \\
		\textit{M$\times$LR} & $\varepsilon(\widehat{\bm{V}})$ & $40.5\pm0.2$ & $\mathbf{22.4\pm1.1}$ \\
		& $\varepsilon(\widehat{f})$ & $\mathbf{14.8\pm0.5}$ & $17.8\pm0.5$ \\
		\cline{3-4}
		& $\varepsilon(\widehat{\bm{U}})$ & $24.4\pm0.7$ & $\mathbf{5.0\pm0.7}$ \\
		\textit{M$\times$M} & $\varepsilon(\widehat{\bm{V}})$ & $40.1\pm0.5$ & $\mathbf{4.6\pm0.6}$ \\
		& $\varepsilon(\widehat{f})$ & $\mathbf{10.2\pm0.4}$ & $10.7\pm0.8$ \\
		\cline{3-4}
	\end{tabular}
\end{table}

Table~\ref{tab:supp_sim} reports $\varepsilon_F$ and $\rho$ for both methods across all four covariance regimes; the LR$\times$M regime (highlighted) is the primary regime discussed in the main text.

\subsection{Method Comparison Benchmarks}\label{sec:supp_methodbench}

\begin{table}[H]
\caption{Method comparison on simulated dataset from main paper with $N{=}45$, $C{=}12$, $T{=}75$. Bold = best per row. Mean over 12 seeds; $\pm$ 1 SEM.}
\label{tab:supp_compare}
\centering
\small
\begin{tabular}{llcccccc}
\textbf{prior structure}& & \multicolumn{6}{c}{\textbf{method mean relative error}} \\
& & VEM & KTM & GPLVM & Goris & GPFA & MLE (GH) \\
\cline{3-8}
\noalign{\vskip 1.5pt}
\cline{3-8}
   & $\varepsilon(\widehat{\mathbf{U}})$ & $\mathbf{12.9\pm0.6}$ & $21.1\pm1.2$ & $96.6\pm0.0$ & $63.1\pm0.6$ & $101.4\pm1.4$ & $93.5\pm0.0$ \\
  \textit{LR-LR} & $\varepsilon(\widehat{\mathbf{V}})$ & $45.4\pm0.4$ & $\mathbf{23.2\pm0.9}$ & $75.6\pm0.0$ & $27.7\pm0.7$ & $118.9\pm0.1$ & $71.1\pm0.0$ \\
   & $\varepsilon(\hat{f})$ & $\mathbf{13.2\pm0.8}$ & $17.5\pm0.5$ & $36.1\pm0.4$ & $59.8\pm0.6$ & $59.8\pm0.6$ & $16.1\pm0.5$ \\
\cline{3-8}
 \rowcolor{highlightyellow}  & $\varepsilon(\widehat{\mathbf{U}})$ & $\mathbf{17.4\pm0.4}$ & $21.3\pm0.6$ & $96.6\pm0.0$ & $36.0\pm0.8$ & $129.6\pm1.2$ & $93.3\pm0.0$ \\
 \rowcolor{highlightyellow} \textit{LR-Mat} & $\varepsilon(\widehat{\mathbf{V}})$ & $48.0\pm0.6$ & $\mathbf{4.5\pm0.8}$ & $83.4\pm0.2$ & $45.1\pm0.8$ & $48.3\pm0.5$ & $90.8\pm0.0$ \\
 \rowcolor{highlightyellow}  & $\varepsilon(\hat{f})$ & $12.6\pm0.8$ & $\mathbf{11.9\pm0.7}$ & $17.5\pm0.4$ & $47.0\pm0.6$ & $47.0\pm0.6$ & $10.7\pm0.5$ \\
\cline{3-8}
   & $\varepsilon(\widehat{\mathbf{U}})$ & $15.9\pm0.4$ & $\mathbf{4.8\pm0.9}$ & $63.5\pm0.7$ & $48.9\pm0.8$ & $142.1\pm5.4$ & $92.4\pm0.0$ \\
  \textit{Mat-LR} & $\varepsilon(\widehat{\mathbf{V}})$ & $40.5\pm0.2$ & $\mathbf{22.4\pm1.1}$ & $75.6\pm0.0$ & $24.1\pm0.5$ & $95.3\pm0.3$ & $71.0\pm0.0$ \\
   & $\varepsilon(\hat{f})$ & $\mathbf{14.8\pm0.5}$ & $17.8\pm0.5$ & $38.3\pm0.7$ & $66.7\pm0.8$ & $66.7\pm0.8$ & $17.4\pm0.4$ \\
\cline{3-8}
   & $\varepsilon(\widehat{\mathbf{U}})$ & $24.4\pm0.7$ & $\mathbf{5.0\pm0.7}$ & $82.9\pm0.1$ & $43.0\pm1.0$ & $174.9\pm1.6$ & $92.5\pm0.0$ \\
  \textit{Mat-Mat} & $\varepsilon(\widehat{\mathbf{V}})$ & $40.1\pm0.5$ & $\mathbf{4.6\pm0.6}$ & $78.5\pm0.1$ & $38.5\pm0.3$ & $49.4\pm0.9$ & $90.8\pm0.0$ \\
   & $\varepsilon(\hat{f})$ & $10.2\pm0.4$ & $10.7\pm0.8$ & $18.2\pm0.6$ & $49.7\pm0.7$ & $49.7\pm0.7$ & $\mathbf{9.5\pm0.3}$ \\
\cline{3-8}
\end{tabular}
\end{table}

Table~\ref{tab:supp_compare} reports $\varepsilon_F$ and $\rho$ between our proposed methods and related methods across all four covariance regimes; the highlighted is the primary regime discussed in the main text.

\section{Subspace Alignment}\label{sec:supp_subspace}

We computed the subspace alignment between the neural activity and the latent gain by using the projection method from~\cite{elsayed_reorganization_2016}. 

\begin{equation}
\frac{\text{Tr}(\bm{E}^T \bm{C} \bm{E})}{\sum_{i=1}^{10} \lambda_i(\bm{C})},
\end{equation}

where E is a matrix of the first 10 eigenvectors of $\bm{U}$, $\bm{C}$ is the neural activity covariance matrix, and $\lambda_i(\bm{C})$ is the $\text{i}^{\text{th}}$ eigenvalue value of $\bm{C}$.


\end{document}